\newtheorem{theorem}{Theorem}[section]
\newtheorem{lemma}[theorem]{Lemma}
\theoremstyle{definition}
\theoremstyle{remark}
\newtheorem{remark}[theorem]{Remark}
\numberwithin{equation}{section}
\def\g{{\gamma}}
\def\e{{\epsilon}}
\def\z{{\zeta}}
\def\n{{\nu}}
\def\f{{\phi}}
\def\m{{\mu}}
\def\p{{\psi}}
\def\ttt{{\theta}}
\def\G{{\Gamma}}
\def\o{{\omega}}
\def\aa{{{\mathcal A}}}
\def\bb{{{\mathcal B}}}
\def\ff{{{\mathcal F}}}
\def\uu{{{\mathcal U}}}
\def\vv{{{\mathcal V}}}
\def\xx{{{\mathcal X}}}
\def\yy{{{\mathcal Y}}}
\def\nn{{{\mathcal N}}}
\def\R{{{\bf R}^1}}
\def\CCC{{{\bf C}}}
\def\BBB{{{\bf B}}}
\def\XXX{{{\bf X}}}
\def\YYY{{{\bf Y}}}
\def\AAA{{{\bf A}}}
\def\000{{{\bf 0}}}
\def\9{{\ \hbox{in}\ \O}}
\def\1{{\ \hbox{on}\ \G_1}}
\def\2{{\ \hbox{on}\ \G_2}}
\def\3{{\ \hbox{on}\ \G_3}}
\def\pa{{\partial}}
\def\pp{{\parallel}}
\begin{document}

\title[An hyperbolic system of P.D.E. relevant in general relativity]{An hyperbolic system of P.D.E. relevant in general relativity}
\author{Giovanni Cimatti}
\address{Department of Mathematics, Largo Bruno
  Pontecorvo 5, 56127 Pisa Italy}
\email{cimatti@dm.unipi.it}


\subjclass[2010]{83C10, 83C05}



\keywords{Einstein-Rosen metric, Initial-boundary value problem, Hyperbolic nonlinear system, Existence, Uniqueness}

\begin{abstract}
Assuming as starting point the validity of the Einstein-Rosen metric we study the hyperbolic system of P.D.E. to which the Einstein's field equations can be reduced. We prove using the implicit function theorem in Banach spaces, the existence and uniqueness of gravitational waves of small amplitude. A class of solutions, not necessarily small, is also constructed. In the last Section a theorem of existence and uniqueness is given for the corresponding stationary problem.
\end{abstract}

\maketitle

\section{Introduction}
The Einstein's equations of general relativity have attracted the interest of mathematicians since the very beginning of the theory \cite{S} and this interest continues today. Crucial in the development of the theory are the seminal works of Yvonne Choquet-Bruhat \cite{FB1},\cite{CB1}\cite{CB2} and \cite{CB3}. She was able to formulate the problem of the determination of the ten relevant potentials as an initial value problem for an hyperbolic system and to prove the local existence and uniqueness of the solution. The work of Choquet-Bruhat deals with the Einstein's equations in full generality, i.e. without any ``a priori'' restriction on the form of the basic metric. To simplify the study of the field's equations two special geometries are considered: the spherical symmetric case which was the first to be examined \cite{SS}, \cite{SS1}  and remains the most important (see in this respect the results of M. Dafermos \cite{D1}, \cite{D2}, \cite{D3}) and the axis symmetric case which was considered by  Einstein and Rosen \cite{ER}, \cite{R}, \cite{B1}, \cite{B2}, \cite{B3}, \cite{PR}, \cite{WW} and \cite{JW} (pag. 99). In this second case, one assumes as starting point the simplified metric in cylindrical coordinates

\begin{equation}
\label{1_1}
ds^2=e^{2\n-2\m}dt^2-e^{2\n-2\m}dr^2-e^{2\m}dz^2-(r^2e^{-2\m}+e^{2\m}\o^2)d\f^2-2e^{2\m}\o dz d\f,
\end{equation}
where $\m$, $\n$ and $\o$ are real functions of the variables $r$ and $t$. Einstein and Rosen were able to find a class of exact wave solutions of the corresponding field's equations. This was in the past one of the most compelling evidence that general relativity predicts the existence of gravitational waves, a fact today confirmed by experimental observations. If we take $\o=0$ in (\ref{1_1}) we obtain the metric

\begin{equation}
\label{1}
ds^2=e^{2\n-2\m}dt^2-e^{2\n-2\m}dr^2-e^{2\m}dz^2-(r^2e^{-2\m})d\f^2.
\end{equation}
The corresponding field's equations are

\begin{equation}
\label{1_2}
\m_{rr}+\frac{1}{r}\m_r-\m_{tt}=0
\end{equation}

\begin{equation}
\label{2_2}
\n_{r}=r(\m^2_r+\m_t^2)
\end{equation}

\begin{equation}
\label{3_2}
\n_t=2r\m_r\m_t.
\end{equation}
This case for its simplicity has been, and still is, the object, in various contexts, of many papers we quote, among others  \cite{SW}, \cite{SSC}, \cite{SSW}, \cite{CCC} and \cite{TKB}. In this paper we consider the case $\o\neq 0$. This corresponds to two states of polarisation's of cylindrical waves \cite{K}. The system determining $\o(r,t)$ and $\m(r,t)$ is now

\begin{equation}
\label{1_-10}
\o_{tt}-\o_{rr}+\frac{\o_r}{r}=4(\m_r\o_r-\o_t\m_t),\quad r>0
\end{equation}

\begin{equation}
\label{2_-10}
\m_{tt}-\m_{rr}-\frac{\m_r}{r}=\frac{e^{4\m}}{2r^2}(\o_t^2-\o^2_r),
\end{equation}
$\nu(r,t)$ is then determined by a simple integration of an exact differential form.

 By suitably defining $\p$ and $\f$, the system (\ref{1_-10}), (\ref{2_-10}) is shown in Section 5 to be equivalent  to the system

\begin{equation}
\label{3_-10}
\p_{tt}-\p_{rr}-\frac{\p_r}{r}=e^{2\p}(\f_t^2-\f^2_r)
\end{equation}

\begin{equation}
\label{4_-10}
\bigl(e^{2\p}\f_t\bigl)_t-\frac{1}{r}\bigl(e^{2\p}\f_r\bigl)_r=0
\end{equation}
which does not seem to have been studied elsewhere. In all the contributions dealing with the field's equations corresponding to the metric (\ref{1_1}) the problem of possible and mathematically sound boundary conditions is not taken into account \footnote{See however the paper \cite{Reula}}. This question was well present to the mind of Einstein who wrote ``in the first place the boundary conditions presuppose a definite choice of a reference which is contrary to the spirit of relativity''\cite{E}.  However, even taking into account this remark, we think interesting to study the mathematical details of the most typical initial-boundary associated with (\ref{1_-10}) and (\ref{2_-10}). We note also that matter and energy are not distributed uniformly in the universe and this asks for some sort of boundary conditions. 

 To make the paper self-contained we give in Section 2 a detailed derivation of the field's equations corresponding to the metric (\ref{1_1}). 

In Section 3 a result of uniqueness is given for the space flat solution. Whereas in  Section 4 we deal with the existence and uniqueness of exact gravitational waves of small amplitude in the framework of the initial-boundary value problem stated in Section 3. The result is obtained using the inverse function theorem in Banach spaces. The solution obtained in this way exists ``a priori'' only for small initial-boundary data. Thus a natural question arises: do all solutions of (\ref{1_-10}), (\ref{2_-10}) exist for any $t>0$ or certain solutions develop singularities after a finite interval of time ? In Section 5 we present  an example of a class of exact solution of (\ref{1_-10}), (\ref{2_-10}) which are globally defined for $r>0$ and $t>0$, and is different from the class of solutions found by Einstein and Rosen in \cite{ER}. It is an open question if all solutions are equally globally defined. Finally in Section 6 we consider the case in which the unknown functions $\p$ and $\f$ depend only on $r$ and prove a result of existence and uniqueness for the corresponding two point problem.

\section{The Einstein's equations in the stationary axis symmetric case. Derivation of the equations}

We prove here that the field equations corresponding to the metric (\ref{1_1}) are precisely (\ref{1_-10}), (\ref{2_-10}). We note first of all that the only non-vanishing components of the Einstein's symmetric tensor $G_{ij}$ in the present axis  symmetric case are $G_{11},\ G_{22},\ G_{33},\ G_{44},\ G_{12}$ and $ G_{34}$. Thus the system $G_{ij}=0$ is over-determined (as often in general relativity) since we have 6 equations and 3 unknown functions i.e. $\m$, $\n$ and $\o$. We wish to prove that this over-determination is only apparent and that the 6 equations reduce to the two equations (\ref{1_-10}), (\ref{2_-10}) for the determination of $\m(r,t)$ and $\o(r,t)$ and to an exact differential form determining $\n(r,t)$. With this goal in mind we set

\begin{equation*}
A=\frac{1}{r}(r\m_r)_r-\m_{rr},\quad B=\o_{rr}-\frac{1}{r}\o_r-\o_{tt},\quad C=\o_t^2-\o_r^2,
\end{equation*}

\begin{equation*}
D=\m_r\o_r-\m_t\m_t,\quad E=\m_t^2-\m_r^2,\quad F=\n_{tt}-\n_{rr},\quad H=\m_t^2+\m_r^2,\quad L=\o^2_t+\o_t^2,
\end{equation*}

\begin{equation*}
M=8A+\frac{3e^{4\m}}{r^2}C+4(E+F).
\end{equation*}
From $G_{33}=0$, $G_{34}=0$, $G_{44}=0$ we have respectively the equations

\begin{equation}
\label{1_6}
M=0
\end{equation}

\begin{equation}
\label{2_6}
\o r^2M+2r^2(B+4D)=0
\end{equation}

\begin{equation}
\label{3_6}
\o^2r^2(M+4r^2\o^2(B+D))+4r^4e^{-4\m}(E+F)-r^2C=0.
\end{equation}
From (\ref{1_6}) and (\ref{2_6}) we obtain

\begin{equation}
\label{4_6}
B+4D=0
\end{equation}
i.e.

\begin{equation}
\label{5_6}
\o_{tt}-\o_{rr}+\frac{\o_r}{r}=4(\m_r\o_r-\o_t\m_t).
\end{equation}
From (\ref{1_6}) and (\ref{4_6}) we get, by (\ref{3_6}),

\begin{equation}
\label{1_7}
F=\frac{1}{4r^2}e^{4\m}C-F.
\end{equation}
Moreover, from (\ref{1_7}) and (\ref{1_6}) we obtain

\begin{equation}
\label{2_7}
A+\frac{e^4{4\m}}{2r^2}=0
\end{equation}
i.e.

\begin{equation}
\label{3_7}
\m_{tt}-\m_{rr}-\frac{\m_r}{r}=\frac{e^{4\m}}{2r^2}(\o_t^2-\o^2_r).
\end{equation}
The equations (\ref{5_6}) and (\ref{3_7}) form precisely the non-linear system we want to study. It remains to consider the equations $G_{22}=0$, $G_{11}=0$ and $G_{12}=0$. They shall determine $\n(r,t)$. From $G_{12}=0$ we have

\begin{equation}
\label{1_8}
\n_t=2r\m_t\m_r+\frac{1}{2r}e^{4\m}\o_r\o_t
\end{equation}
and from $G_{22}$ or $G_{11}$ we infer

\begin{equation}
\label{2_8}
\n_r=r(\m_t^2+\m_r^2)+\frac{e^{4\m}}{4r}(\o^2_t+\o_r^2).
\end{equation}
We claim that (\ref{1_8}) and (\ref{2_8}) are compatible and determine $\n(r,t)$, apart an arbitrary constant.

\begin{lemma}
If $(\m(r,t),\o(r,t))$ is a solution of the system

\begin{equation}
\label{1_9}
\o_{tt}-\o_{rr}+\frac{\o_r}{r}=4(\m_r\o_r-\o_t\m_t)
\end{equation}

\begin{equation}
\label{2_9}
\m_{tt}-\m_{rr}-\frac{\m_r}{r}=\frac{e^{4\m}}{2r^2}(\o_t^2-\o^2_r)
\end{equation}
the differential form

\begin{equation}
\label{3_9}
F(r,t)dr+G(r,t)dt,
\end{equation}
where

\begin{equation}
\label{4_9}
G(r,t)=2r\m_t\m_r+\frac{1}{2r}e^{4\m}\o_r\o_t
\end{equation}

\begin{equation}
\label{5_9}
F(r,t)=r(\m_t^2+\m_r^2)+\frac{e^{4\m}}{4r}(\o^2_t+\o_r^2)
\end{equation}
is exact.
\end{lemma}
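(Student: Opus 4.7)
The plan is to verify exactness by the classical criterion $F_t = G_r$, using the two field equations (\ref{1_9}) and (\ref{2_9}) to close the computation. Since the differential form is defined on (a subset of) the simply connected half-plane $\{r > 0\}$, this closedness condition is sufficient to conclude exactness.

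First, I would differentiate $F$ with respect to $t$ and $G$ with respect to $r$ directly:
\begin{equation*}
F_t = 2r(\mu_t\mu_{tt} + \mu_r\mu_{rt}) + \frac{\mu_t e^{4\mu}}{r}(\omega_t^2 + \omega_r^2) + \frac{e^{4\mu}}{2r}(\omega_t\omega_{tt} + \omega_r\omega_{rt}),
\end{equation*}
\begin{equation*}
G_r = 2\mu_t\mu_r + 2r(\mu_{tr}\mu_r + \mu_t\mu_{rr}) + \frac{2\mu_r e^{4\mu}}{r}\omega_r\omega_t - \frac{e^{4\mu}}{2r^2}\omega_r\omega_t + \frac{e^{4\mu}}{2r}(\omega_{rr}\omega_t + \omega_r\omega_{rt}).
\end{equation*}
The terms $2r\mu_r\mu_{rt}$ and $\frac{e^{4\mu}}{2r}\omega_r\omega_{rt}$ appear on both sides (using $\mu_{rt} = \mu_{tr}$ and $\omega_{rt} = \omega_{tr}$) and cancel.

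Next, after this cancellation the remaining difference $F_t - G_r$ contains a factor $\mu_t(\mu_{tt} - \mu_{rr})$ and a factor $\omega_t(\omega_{tt} - \omega_{rr})$. At this point I would substitute the wave-type equations (\ref{2_9}) and (\ref{1_9}) in the forms
\begin{equation*}
\mu_{tt} - \mu_{rr} = \frac{\mu_r}{r} + \frac{e^{4\mu}}{2r^2}(\omega_t^2 - \omega_r^2), \qquad \omega_{tt} - \omega_{rr} = -\frac{\omega_r}{r} + 4(\mu_r\omega_r - \mu_t\omega_t).
\end{equation*}
After this substitution, the remaining expression collects algebraically into four terms: one of type $\mu_t e^{4\mu} r^{-1}\omega_t^2$ from each of the two substitutions (with opposite signs after accounting for $-\omega_r^2 + \omega_r^2 = 0$ and the explicit $-2\omega_t^2\mu_t$ from the $\omega$-equation), and one of type $\mu_r e^{4\mu} r^{-1}\omega_r\omega_t$ again coming with opposite signs. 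A direct check shows that all four terms cancel in pairs, yielding $F_t - G_r = 0$.

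The main (and essentially only) obstacle is clerical rather than conceptual: one must keep careful track of the sign conventions and of the coefficients $\tfrac12$, $\tfrac14$, $r$, and $r^{-2}$ in (\ref{4_9})--(\ref{5_9}) so that the cancellations come out exactly. Once $F_t = G_r$ is established, exactness of $F\,dr + G\,dt$ on the simply connected domain $\{r > 0\}$ (e.g., a rectangle $(r_1,r_2)\times(0,T)$) follows from Poincaré's lemma, determining $\nu(r,t)$ up to an additive constant by a line integral from any chosen base point.
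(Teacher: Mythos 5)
Your proposal is correct and takes essentially the same approach as the paper: a direct verification that $F_t - G_r = 0$ after substituting the two field equations, the only difference being that the paper organizes the identical cancellations by adding and subtracting suitable quantities rather than by direct substitution. Your explicit appeal to the Poincar\'e lemma on the simply connected domain $\{r>0\}$ makes precise a final step that the paper leaves implicit.
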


\begin{proof}
We need to prove that $F_t(r,t)-G_r(r,t)=0$ if (\ref{1_9}) and (\ref{2_9}) hold. To this end we add and subtract in $F_t(r,t)-G_r(r,t)$ the quantity

\begin{equation*}
4r^3\m_t\Bigl(\frac{e^{4\m}}{2r^2}\o_t^2+\frac{e^{4\m}}{2r^2}\Bigl).
\end{equation*}
We have, taking into account (\ref{2_9}),

\begin{equation}
\label{1_10}
F_t(r,t)-G_r(r,t)=-2\m_te^{4\m}\o_t^2+e^{4\m}\o_t r\Bigl(\o_{rr}-\o_{tt}-\frac{\o_r}{r}+4\m_r\o_r\Bigl).
\end{equation}
Finally adding and subtracting  $4\m_t\o_t$ from the right hand side of (\ref{1_10}) we obtain, by (\ref{1_9}), $F_t(r,t)-G_r(r,t)=0$ as required.
\end{proof}

\section{An initial-boundary value problem for the system (\ref{1_-10}), (\ref{2_-10}). Uniqueness of the flat-space solution}
A question naturally arises: what are the side conditions which must be added to the system

\begin{equation}
\label{1_12}
\o_{tt}-\o_{rr}+\frac{\o_r}{r}=4(\m_r\o_r-\o_t\m_t)
\end{equation}

\begin{equation}
\label{2_12}
\m_{tt}-\m_{rr}-\frac{\m_r}{r}=\frac{e^{4\m}}{2r^2}(\o_t^2-\o^2_r)
\end{equation}
to obtain a well-posed problem capable of selecting a unique solution of (\ref{1_12}), (\ref{2_12})? We suppose that in the cylinders  $0<r<R_1$ and $R_2<r<\infty$ of the euclidean space referred to cylindrical coordinates is contained all the matter and energy which deform the flat-space metric and consider the problem in the cylinder $R_1<r<R_2$ which is assumed to be empty of matter and energy. The distribution of matter is supposed to be independent of the angular coordinate. We thus have the following initial-boundary conditions for (\ref{1_12}) and (\ref{2_12}):

\begin{equation}
\label{1_14}
\m(R_1,t)=m_1(t),\quad \m(R_2,t)=m_2(t),\quad \m(r,0)=m_0(r),\quad \m_t(r,0)=\tilde m_0(r)
\end{equation}

\begin{equation}
\label{2_14}
\o(R_1,t)=o_1(t),\quad \o(R_2,t)=o_2(t),\quad \o(r,0)=o_0(r),\quad \o_t(r,0)=\tilde o_0(r).\footnote{For example if $R_2/R_1$ is very large and for $r>R_2$  the flat-space solution holds  we would have $m_2(t)=0$, $o_2(t)=0$,  $\tilde m_0(r)=0$ and $\tilde o_0(r)=0$.}
\end{equation}
We start by considering the case in which all the initial-boundary conditions are those corresponding to the flat-space solution. We have the following

\begin{theorem}
The problem

\begin{equation}
\label{1_16}
\o_{tt}-\o_{rr}+\frac{\o_r}{r}=4(\m_r\o_r-\o_t\m_t),\quad r>0
\end{equation}

\begin{equation}
\label{2_16}
\m_{tt}-\m_{rr}-\frac{\m_r}{r}=\frac{e^{4\m}}{2r^2}(\o_t^2-\o^2_r)
\end{equation}

\begin{equation}
\label{3_16}
\m(R_1,t)=0,\quad \m(R_2,t)=0,\quad \m(r,0)=0,\quad \m_t(r,0)=0
\end{equation}

\begin{equation}
\label{4_16}
\o(R_1,t)=0,\quad \o(R_2,t)=0,\quad \o(r,0)=0,\quad \o_t(r,0)=0
\end{equation}
has only the solution $\m(r,t)=0$, $\o(r,t)=0$.
\end{theorem}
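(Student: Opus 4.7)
The strategy is to turn Lemma 2.1 into a conservation of energy. Set
\[
E(t) := \int_{R_1}^{R_2} F(r,t)\, dr = \int_{R_1}^{R_2} \left[r\bigl(\m_t^2+\m_r^2\bigr) + \frac{e^{4\m}}{4r}\bigl(\o_t^2+\o_r^2\bigr)\right] dr,
\]
with $F$ and $G$ as in Lemma 2.1. Because $R_1>0$, the integrand $F$ is a finite, nonnegative, continuous function of $r$ on $[R_1,R_2]$. This is exactly the kind of ``positive'' energy density one wants for a uniqueness argument, and it sits naturally on top of the system via Lemma 2.1.

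First, I would invoke the identity $F_t = G_r$ established inside the proof of Lemma 2.1 (the exactness condition for the 1-form $F\,dr+G\,dt$). Differentiating $E$ under the integral sign gives
\[
E'(t) = \int_{R_1}^{R_2} F_t\, dr = \int_{R_1}^{R_2} G_r\, dr = G(R_2,t) - G(R_1,t).
\]
Second, I would use the boundary conditions: since $\m(R_j,t)=\o(R_j,t)=0$ for $j=1,2$ and all $t\geq 0$, differentiating in $t$ yields $\m_t(R_j,t)=\o_t(R_j,t)=0$; consequently
\[
G(R_j,t) = 2R_j\,\m_t(R_j,t)\,\m_r(R_j,t) + \frac{e^{4\m(R_j,t)}}{2R_j}\,\o_r(R_j,t)\,\o_t(R_j,t) = 0,
\]
so $E'(t)\equiv 0$. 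Third, the initial conditions $\m(r,0)=\m_t(r,0)=\o(r,0)=\o_t(r,0)=0$ imply $\m_r(r,0)=\o_r(r,0)=0$, hence $F(r,0)\equiv 0$ and $E(0)=0$. Therefore $E(t)\equiv 0$ on $[0,\infty)$.

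Fourth and last, since $F(r,t)\geq 0$ pointwise on $[R_1,R_2]$ and $\int_{R_1}^{R_2} F(r,t)\,dr = 0$, continuity forces $F(r,t)=0$ everywhere, i.e. $\m_t=\m_r=\o_t=\o_r\equiv 0$, and combined with the initial data this gives $\m\equiv 0$, $\o\equiv 0$. The proof is essentially a direct application of Lemma 2.1 plus nonnegativity of $F$; the only point that might look delicate is checking that the boundary flux $G(R_j,t)$ vanishes, which is immediate from the fact that the Dirichlet-type boundary data kills $\m_t$ and $\o_t$ at $r=R_j$. There is no real obstacle, provided one assumes the solution is regular enough to integrate by parts (i.e. of the class in which Lemma 2.1 is proved).
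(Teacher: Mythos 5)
Your proof is correct, but it takes a genuinely different route from the paper's. The paper never uses Lemma 2.1 here: it derives an energy \emph{inequality} for $\o$ alone, multiplying the wave equation for $\o$ by $\o_t$, integrating by parts over $[R_1,R_2]$ with the boundary conditions, and obtaining $\eta'(t)\leq C\,\eta(t)$ for $\eta(t)=\int_{R_1}^{R_2}(\o_t^2+\o_r^2)\,dr$, where $C$ involves $\sup_{Q_T}|\m_r|$ and $\sup_{Q_T}|\m_t|$; Gronwall's lemma with $\eta(0)=0$ gives $\o\equiv 0$, after which the $\m$-equation degenerates to the linear cylindrical wave equation with zero data and $\m\equiv 0$ follows. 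You instead promote the closedness identity $F_t=G_r$ of Lemma 2.1 to an exact conservation law: the flux $G=2r\m_t\m_r+\tfrac{1}{2r}e^{4\m}\o_r\o_t$ vanishes at $r=R_1,R_2$ because every term carries a factor $\m_t$ or $\o_t$, which the Dirichlet data kill, so $E(t)=\int_{R_1}^{R_2}F\,dr$ is conserved, zero at $t=0$, and pointwise nonnegative (the hypothesis $R_1>0$ is what makes $F\geq 0$ up to the boundary, and is worth stating explicitly). Your route buys several things: no Gronwall, no sup-norm bounds on the derivatives of $\m$ (you use an identity, not an estimate), both unknowns are handled simultaneously rather than sequentially, and the argument has a transparent geometric meaning, since by $\n_r=F$ one has $E(t)=\n(R_2,t)-\n(R_1,t)$, the increment of the metric potential across the annulus. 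What the paper's two-step scheme buys in exchange is robustness: treating $\m$ as a bounded coefficient in a linear equation for $\o$ does not require the system to possess an exactly conserved positive energy, so that argument would survive perturbations of the equations which destroy the exactness of $F\,dr+G\,dt$. Both proofs need the same regularity, namely solutions of class $C^2(\bar Q_T)$, which is the class in which Lemma 2.1 and the existence theorem of Section 4 operate.
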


\begin{proof}
Let $(\m(r,t),\o(r,t))$ be any solution of (\ref{1_16})-(\ref{4_16}). If we can prove that $\o(r,t)=0$ the result follows. For, in this case the equation (\ref{2_16}) becomes simply  the linear wave equation which with the initial- boundary conditions (\ref{3_16}) implies $\m(r,t)=0$. Let us multiply (\ref{2_16}) by $\o_t$. We have, taking into account the boundary conditions and integrating by parts,

\begin{equation}
\label{2_17}
\frac{1}{2}\int_{R_1}^{R_2}(\o_t^2+\o_r^2)dr=4\int_{R_1}^{R_2}\m_r\o_r\o_t dr-4\int_{R_1}^{R_2}\m_t\o_r^2 dr-\frac{1}{2}\int_{R_1}^{R_2}\o_r\o_t dr.
\end{equation}
Using the Cauchy-Schwartz inequality the right hand side of (\ref{2_17}) can easily be estimated and we obtain

\begin{equation}
\label{3_17}
\frac{1}{2}\int_{R_1}^{R_2}(\o_t^2+\o_r^2)dr\leq\Bigl(2 \sup_{Q_T}|\m_r|+4\sup_{Q_T}|\m_t|+\frac{1}{2R_2}\Bigl)\int_{R_1}^{R_2}(\o_t^2+\o_r^2)dr.
\end{equation}
Define 

\begin{equation*}
\eta(t)=\int_{R_1}^{R_2}(\o_t^2+\o_r^2)dr,\quad C_t=\Bigl(2 \sup_{Q_T}|\m_r|+4\sup_{Q_T}|\m_t|+\frac{1}{2R_2}\Bigl).
\end{equation*}
We have, by (\ref{2_17}) and taking into account the initial-boundary conditions,

\begin{equation*}
\eta'(t)\leq\eta(t),\quad \eta(0)=0.
\end{equation*}
By the Gronwall's theorem \cite{T} we conclude that

\begin{equation*}
\int_{R_1}^{R_2}(\o_t^2(r,t)+\o_r^2(r,t))dr=0\quad \hbox{for all}\ t\in[0,T]
\end{equation*}
and also for all $t\in[0,\infty]$ since $T$ is arbitrary. It follows $\o_t(r,t)=0$ and $\o_r(r,t)=0$ and also $\o(r,t)=0$ in view of (\ref{4_16}) as required.
\end{proof}

\begin{remark}
Recalling that

\begin{equation*}
\n_t=2r\m_t\m_r+\frac{1}{2r}e^{4\m}\o_r\o_t\quad \n_r=r(\m_t^2+\m_r^2)+\frac{e^{4\m}}{4r}(\o^2_t+\o_r^2)
\end{equation*}
we have, under the assumptions of Theorem 2.1, $\n_t=0$, $\n_r=0$. Hence $\n(r,t)$ is constant. Therefore, if, in addition to (\ref{3_16}), (\ref{4_16},) we assume the initial condition $\n(0,0)=0$, we have $\n(r,t)=0$ which together with $\m(r,t)=0$, $\o(r,t)=0$ corresponds to the flat-space solution.
\end{remark}

\section{existence and uniqueness of exact gravitational waves of small amplitude}
Theorem 3.1 suggests to investigate if a branch of non trivial solutions starts from the flat space solution when the initial-boundary data are ``small'' in suitably taken functional spaces. To this end we shall use the inverse function theorem in Banach space which we quote below for the sake of completeness \cite{AP}.

\begin{theorem}
Let $\xx$ and $\yy$ be Banach spaces and $\ff$ a map from $\xx$ to $\yy$ of class $C^1$ such that $\ff(\000)=\000$. Let $\ff'(\000)$ be the Frechet differential of $\ff$. If $\ff'(\000)
$, as a linear map from $\xx$ to $\yy$, is invertible with continuous inverse then there exists a neighbourhood $\nn$ of $\000\in\xx$ and a neighbourhood $\vv$ of $\000\in\yy$ such that $\ff:\nn\to\vv$ is invertible with inverse differentiable.
\end{theorem}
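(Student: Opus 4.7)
The standard route is via the Banach contraction mapping principle. First I would reduce to the case $\ff'(\000)=I$: replacing $\ff$ with $\tilde\ff:=(\ff'(\000))^{-1}\circ\ff$ and using that $(\ff'(\000))^{-1}$ is by hypothesis a bounded linear map $\yy\to\xx$, one has $\tilde\ff\in C^1(\xx,\xx)$ with $\tilde\ff'(\000)=I$, and a local inverse of $\tilde\ff$ yields one of $\ff$. Introduce $g(x):=x-\tilde\ff(x)$, so $g'(\000)=0$. By continuity of $\tilde\ff'$ there is $r>0$ with $\|g'(x)\|\leq 1/2$ whenever $\|x\|\leq r$, and the Banach-space mean value inequality gives
\begin{equation*}
\|g(x_1)-g(x_2)\|\leq \tfrac12\|x_1-x_2\|\qquad\hbox{for all}\ x_1,x_2\in\overline B_r,
\end{equation*}
where $\overline B_r:=\{x\in\xx:\|x\|\leq r\}$, whence $\|g(x)\|\leq r/2$ on $\overline B_r$.

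Next, for each $y\in\yy$ with $\|y\|<r/2$ I would define $T_y:\overline B_r\to\xx$ by $T_y(x):=y+g(x)$; a fixed point of $T_y$ is exactly a solution of $\tilde\ff(x)=y$. The estimates above show $T_y$ sends $\overline B_r$ into itself and is a $\tfrac12$-contraction, so Banach's fixed point theorem supplies a unique $x(y)\in\overline B_r$ with $\tilde\ff(x(y))=y$. Taking $\vv:=\{y\in\yy:\|y\|<r/2\}$ and $\nn:=\tilde\ff^{-1}(\vv)\cap B_r$ produces the required local inverse. Subtracting the identities $x(y_i)=y_i+g(x(y_i))$ yields $\|x(y_1)-x(y_2)\|\leq 2\|y_1-y_2\|$, so $\tilde\ff^{-1}$ is Lipschitz on $\vv$.

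For Fr\'echet differentiability, since the set of invertible operators is open in $\mathcal{L}(\xx,\xx)$ and inversion is continuous there, shrinking $r$ if needed one may assume $\tilde\ff'(x)$ is invertible for every $x\in\overline B_r$. Writing $x_i:=\tilde\ff^{-1}(y_i)$ and expanding $y_2-y_1=\tilde\ff'(x_1)(x_2-x_1)+o(\|x_2-x_1\|)$, one applies $(\tilde\ff'(x_1))^{-1}$ and invokes the Lipschitz bound to convert the $o(\|x_2-x_1\|)$ into $o(\|y_2-y_1\|)$, yielding $(\tilde\ff^{-1})'(y_1)=(\tilde\ff'(x_1))^{-1}$. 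The main technical point is precisely this conversion of little-$o$ terms in the differentiability step; the conceptual heart of the argument is the contraction estimate $\|g'\|\leq 1/2$, which turns local invertibility of $\tilde\ff'(\000)$ into solvability of $\ff(x)=y$ for all small $y$.
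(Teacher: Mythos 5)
You cannot be compared against the paper's own proof here, because the paper gives none: this statement is Theorem 4.1, which the author explicitly quotes without proof from Ambrosetti--Prodi \cite{AP} (``which we quote below for the sake of completeness'') and then uses as a black box in the proof of Theorem 4.2. Your proposal is therefore a genuine addition rather than a variant, and it is correct: it is the standard contraction-mapping proof (essentially the one in \cite{AP}). The reduction to $\tilde\ff'(\000)=I$ via the bounded operator $(\ff'(\000))^{-1}$ is legitimate; the estimate $\|g'(x)\|\le\tfrac12$ on $\overline B_r$ together with $g(\000)=\000$ (which follows from $\ff(\000)=\000$) gives both the self-mapping property and the contraction property of $T_y(x)=y+g(x)$ for $\|y\|<r/2$; the fixed point solves $\tilde\ff(x)=y$ uniquely in $\overline B_r$, and the $2$-Lipschitz bound on the inverse is exactly what is needed to convert $o(\|x_2-x_1\|)$ into $o(\|y_2-y_1\|)$ in the differentiability step. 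Two minor simplifications you could make: (i) there is no need to invoke openness of the set of invertible operators and shrink $r$, since on $\overline B_r$ one already has $\|I-\tilde\ff'(x)\|=\|g'(x)\|\le\tfrac12$, so $\tilde\ff'(x)$ is invertible by the Neumann series with the uniform bound $\|(\tilde\ff'(x))^{-1}\|\le 2$, which is also the bound implicitly needed in the little-$o$ conversion; (ii) it is worth stating explicitly that the fixed point satisfies $\|x(y)\|\le\|y\|+\tfrac r2<r$, so the solution lies in the open ball and your $\nn=\tilde\ff^{-1}(\vv)\cap B_r$ is indeed mapped bijectively onto $\vv$. Your argument in fact delivers slightly more than the statement requires: the local inverse is $C^1$, not merely differentiable, since $y\mapsto(\tilde\ff'(\tilde\ff^{-1}(y)))^{-1}$ is continuous.
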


\begin{theorem}
Assume

\begin{equation}
\label{1_25}
m_1(t)\in C^2([0,T]),\ m_2(t)\in C^2([0,T]),\ m_0(t)\in C^2([R_1,R_2]),\ \tilde m_0\in C^1([R_1,R_2])
\end{equation} 

\begin{equation}
\label{2_25}
o_1(t)\in C^2([0,T]),\ o_2(t)\in C^2([0,T]),\ o_0(t)\in C^2([R_1,R_2]),\ \tilde o_0\in C^1([R_1,R_2])
\end{equation} 
with 

\begin{equation}
\label{3_25}
m_1(0)=m_0(R_1),\ m_2(0)=m_0(R_2),\ m'_1(0)=\tilde m_0(R_1),\ m_2'(0)=\tilde m_2(R_2)
\end{equation} 

\begin{equation}
\label{3_26}
o_1(0)=o_0(R_1),\ o_2(0)=o_0(R_2),\ o'_1(0)=\tilde o_0(R_1),\ o_2'(0)=\tilde o_2(R_2). \footnote{These are compatibility conditions needed to obtain a regular solution to the problem.}
\end{equation} 
There exists $\e>0$ such that, if 

\begin{equation}
\label{1_26}
\pp m_1\pp_{C^2([0,T])}+\pp m_2\pp_{C^2([0,T])}+\pp m_0\pp_{C^2([R_1,R_2])}+\pp\tilde m_0\pp_{C^1([R_1,R_2])}+
\end{equation} 

\begin{equation*}
\pp o_1\pp_{C^2([0,T])}+\pp o_2\pp_{C^2([0,T])}+\pp o_0\pp_{C^2([R_1,R_2])}+\pp\tilde o_0\pp_{C^1([R_1,R_2])}< \e,
\end{equation*}
the initial-boundary value problem

\begin{equation}
\label{1_27}
\o_{tt}-\o_{rr}+\frac{\o_r}{r}=4(\m_r\o_r-\o_t\m_t)
\end{equation}

\begin{equation}
\label{2_27}
\m_{tt}-\m_{rr}-\frac{\m_r}{r}=\frac{e^{4\m}}{2r^2}(\o_t^2-\o^2_r)
\end{equation}

\begin{equation}
\label{3_27}
\o(R_1,t)=o_1(t),\ \o(R_2,t)=o_2(t),\ \o(r,0)=o_0(r),\ \o_t(r,0)=\tilde o_0(r)
\end{equation}

\begin{equation}
\label{4_27}
\m(R_1,t)=m_1(t),\ \m(R_2,t)=m_2(t),\ \m(r,0)=m_0(r),\ \m_t(r,0)=\tilde m_0(r)
\end{equation}
has one and only one solution of class $C^2(\bar Q_T)$.
\end{theorem}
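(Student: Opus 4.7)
The plan is to write the initial-boundary value problem in the abstract form $\ff(\m, \o) = \mathcal{D}$ --- where $\mathcal{D}$ encodes the eight pieces of boundary and initial data --- and apply Theorem 4.1 at the flat-space solution $(\m, \o) = (0, 0)$. Setting $Q_T = (R_1, R_2) \times (0, T)$, a natural choice of spaces is $\xx = C^{2,\a}(\bar Q_T) \times C^{2,\a}(\bar Q_T)$ for some $\a \in (0,1)$ and $\yy = C^{0,\a}(\bar Q_T)^2 \times \bb$, where $\bb$ is the product of the eight data spaces from $(\ref{1_25})$--$(\ref{3_26})$ restricted to the closed affine subspace cut out by the corner compatibility conditions $(\ref{3_25})$--$(\ref{3_26})$ together with their higher-order counterparts required for $C^2$ regularity. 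The Hölder refinement is needed to ensure that the inhomogeneous scalar wave equation with right-hand side in $\yy$ has its solution in $\xx$. Define
$$
\ff(\m, \o) = \Bigl(\o_{tt} - \o_{rr} + \tfrac{\o_r}{r} - 4(\m_r \o_r - \o_t \m_t),\ \m_{tt} - \m_{rr} - \tfrac{\m_r}{r} - \tfrac{e^{4\m}}{2r^2}(\o_t^2 - \o_r^2),\ \mathrm{eight\ traces}\Bigr).
$$
Since $r \geq R_1 > 0$ on $\bar Q_T$, the nonlinearities are smooth compositions of $C^{1,\a}$ quantities, so $\ff \in C^\infty(\xx, \yy)$ and $\ff(0, 0) = 0$.

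Each nonlinear source on the right of $(\ref{1_27})$--$(\ref{2_27})$ is quadratic or higher in the perturbation, so the Frechet differential at the origin has the particularly clean form
$$
\ff'(0,0)(\tilde\m, \tilde\o) = \Bigl(\tilde\o_{tt} - \tilde\o_{rr} + \tfrac{\tilde\o_r}{r},\ \tilde\m_{tt} - \tilde\m_{rr} - \tfrac{\tilde\m_r}{r},\ \mathrm{eight\ traces}\Bigr),
$$
a pair of \emph{decoupled} linear scalar initial-boundary value problems of Bessel-wave type. Invertibility of $\ff'(0,0)$ as a bounded map $\xx \to \yy$ thus reduces to proving, for each sign, that the scalar problem $u_{tt} - u_{rr} \pm u_r/r = f$ on $(R_1, R_2) \times (0, T)$, with prescribed Dirichlet lateral data and Cauchy initial data, has a unique solution in $C^{2,\a}(\bar Q_T)$ depending continuously on $(f, u|_{r = R_1}, u|_{r = R_2}, u|_{t = 0}, u_t|_{t = 0})$.

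The main obstacle is exactly this linear well-posedness statement. Since $1/r$ is smooth and bounded on $[R_1, R_2]$, the term $\pm u_r/r$ is a lower-order perturbation of the standard wave operator, but one still has to handle the reflection of characteristics off the lateral boundaries and the corner regularity. I would first lift the boundary and initial data by writing $u = \tilde u + v$, with $v$ an explicit smooth lift, to reduce to zero data with a modified source; then apply the d'Alembert/Duhamel representation for the 1D wave equation on $[R_1, R_2]$ with reflections off $r = R_1, R_2$ to construct a solution operator $\kk$ for the principal part; and finally recast the problem as the linear Volterra-type integral equation $u = \kk(f \mp u_r/r)$ in $C^{2,\a}(\bar Q_T)$. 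A contraction argument on short characteristic time-slabs, iterated via Gronwall to exhaust $[0, T]$, then yields existence, uniqueness, and continuity of the inverse. Alternatively, the substitution $u = w/\sqrt{r}$ absorbs the first-order term and reduces the equation to $w_{tt} - w_{rr} + V(r) w = \sqrt{r}\, f$ with smooth bounded potential $V$, for which $C^{2,\a}$ well-posedness on a strip is classical.

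Once $\ff'(0, 0)$ is established as a linear isomorphism $\xx \to \yy$, Theorem 4.1 furnishes a neighborhood $\vv$ of $0 \in \yy$ and a $C^1$ inverse $\ff^{-1} : \vv \to \nn \subset \xx$. Choosing $\e > 0$ so that every datum $\mathcal{D}$ satisfying $(\ref{1_26})$ lies inside $\vv$ yields the unique solution $(\m, \o) = \ff^{-1}(\mathcal{D}) \in C^{2,\a}(\bar Q_T)^2 \subset C^2(\bar Q_T)^2$, completing the proof.
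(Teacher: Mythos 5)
Your strategy coincides with the paper's: recast the problem as $\ff(\m,\o)=\YYY$, observe that the nonlinear terms are quadratic so that $\ff'(\000)$ consists of two decoupled cylindrical wave operators together with the trace maps, and invoke the inverse function theorem (Theorem 4.1) at the flat-space solution. You deviate in three places: you use H\"older spaces in place of the paper's $C^2/C^0$ pair, you add the second-order corner compatibility conditions to the data space, and you attempt to prove the linear well-posedness that the paper merely asserts (``certainly solvable with continuous dependence''). You are right that this linear step is the crux, and right that a merely continuous right-hand side is not enough; but your H\"older repair does not work either.

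The gap is the claim that $C^{0,\a}$ sources yield $C^{2,\a}$ solutions of the linearized problems: there is no Schauder-type gain for hyperbolic operators. Concretely, let $g$ be H\"older continuous and nowhere differentiable and take $f(r,t)=g(r+t)$. In characteristic coordinates $\xi=r+t$, $\eta=r-t$, any $C^2$ solution of $u_{tt}-u_{rr}=f$ obeys $u_{\xi\eta}=-\frac14 g(\xi)$, hence $u_\xi=-\frac14\,g(\xi)\eta+c(\xi)$, and the existence of $u_{\xi\xi}$ at two distinct values of $\eta$ would force $g$ to be differentiable; so no $C^2$ solution exists on any open set. The lower-order term $\pm u_r/r$ changes nothing: your own substitution $u=w/\sqrt r$ reduces it to the flat wave operator plus a smooth potential, and the potential term, being $C^2$, can be subtracted off by a genuine $C^2$ Duhamel solution, reproducing the counterexample. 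Consequently $\ff'(\000)$ is \emph{not} surjective onto your $\yy$, Theorem 4.1 does not apply, and your Volterra/contraction scheme produces mild solutions rather than the needed $C^{2,\a}$ isomorphism. There is also a structural problem with your ``closed affine subspace'' of compatible data: for the nonlinear map $\ff$ the second-order corner conditions are \emph{nonlinear} relations among the components of $\YYY$ (they involve $e^{4\m}$ and the quadratic terms evaluated at the corners), so the image of $\ff$ does not lie in the affine subspace carved out by the linearized conditions; one must either drop them (losing surjectivity of the differential) or treat the compatible data as a nonlinear manifold. To be fair, the paper's own proof has the same defect in stronger form — it asserts $C^0\to C^2$ solvability, which is false, and omits second-order compatibility altogether — but your proposal does not repair it. A genuine repair must change the functional setting: sources with one extra derivative (with the ensuing loss-of-derivatives bookkeeping), or an $L^2$-based energy ($H^s$) framework for the linear isomorphism, recovering classical regularity at the end by Sobolev embedding and restating the smallness hypotheses accordingly.
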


\begin{proof}
We apply Theorem 4.1 with the following functional spaces

\begin{equation*}
\xx=C^2(\bar Q_T)\times C^2(\bar Q_T),\quad \yy=\aa\times \bb\times \bb
\end{equation*}
where  $\aa=C^0(\bar Q_T)\times C^0(\bar Q_T)$ and
\vskip .5cm
 $\bb=\bigl\{(A_1(t),A_2(t),A_0(t),B(r));\ A_1(t)\in C^2([0,T]),\ A_2(t)\in C^2([0,T]),\ A_0(t)\in C^2([R_1,R_2]), B(r)\in C^1([R_1,R_2]),\ A_1(0)=A_0(R_1),\ A_2(0)=A_0(R_2),\ 
A_1'(0)=B(R_1),\ A_2'(0)=B(R_2)\bigl\}$.
\vskip .5 cm
 $\bb$ becomes a Banach space with the norm

\begin{equation*}
  \pp \BBB  \pp_\bb=\pp A_1\pp_{C^2([0,T])}+\pp A_2\pp_{C^2([0,T])}+\pp A_0\pp_{C^2([R_1,R_2])}+\pp B\pp_{C^1([R_1,R_2])},\ \BBB \in\bb.
\end{equation*}
$\xx$ is normed with

\begin{equation*}
\pp\XXX  \pp_\xx=\pp X_1\pp_{C^2(\bar Q_T)}+\pp X_2\pp_{C^2(\bar Q_T)},\ \XXX\in\xx.
\end{equation*}
Finally $\yy$ is a Banach spaces with norm $\pp\YYY \pp_\yy=\pp\AAA \pp_\aa+\pp\ \BBB \pp_\bb+\pp\ \CCC \pp_\bb$. Define $\G_1=\{(r,t) ;\ r=R_1,\ 0<t<T\}$, $\G_0=\{(r,t) ;\ t=0,\ R_1<r<R_2\}$,\ $\G_2=\{(r,t) ;\ r=R_2,\ 0<t<T\}$. Let $(\m(r,t),\o(r,t))\in\xx$. The function $\ff:\xx\to\yy$ of Theorem 4.1 has here the form

\begin{equation*}
\ff((\m,\o))=\Bigl( \overbrace {\m_{tt}-\m_{rr}-\frac{\m_r}{r}-\frac{e^{4\m}}{2r^2}(\o_t^2-\o^2_r),\  \o_{tt}-\o_{rr}+\frac{\o_r}{r}-4(\m_r\o_r-\o_t\m_t)}^{in\ \aa},
\end{equation*}

\begin{equation*}
 \underbrace{\m_{|\G_1},\ \m_{|\G_2},\ \m_{|\G_0},\ (\frac{\pa\m}{\pa t})_{|\G_0}}_{in\ \bb},\quad\underbrace{\o_{|\G_1},\quad \o_{|\G_2},\ \o_{|\G_0},\quad (\frac{\pa\o}{\pa t})_{|\G_0}}_{in\ \bb}\Bigl).
\end{equation*}
It is easy to verify that in these spaces the function $\ff$ is well-defined and $\ff(\000)=\000$. If we compute the Frechet's differential of $\ff$ in $\000\in\xx$ we find, with $(U,V)\in \xx$

\begin{equation*}
\ff'(\000)[U,V]=\Bigl(\overbrace{(U_{tt}-U_{rr}-\frac{1}{r}U_r,\quad V_{tt}-V_{rr}-\frac{1}{r}V_r)}^{in\ \aa},
\end{equation*}

\begin{equation*}
 \bigl(\underbrace{U_{|\G_1}, \ U_{|\G_2},\ U_{|\G_0},\ \bigl(\frac{\pa U}{\pa t}\bigl)_{|\G_0})}_{in\ \bb},\ (\underbrace{V_{|\G_1}, \ V_{|\G_2},\ V_{|\G_0},\ \bigl(\frac{\pa V}{\pa t}\bigl)_{|\G_0}}_{in\ \bb})\Bigl).
\end{equation*}
To apply Theorem 4.1 we need to prove that $\ff'$ is invertible as a linear operator from $\xx$ to $\yy$. To this end we consider the linear problem 

\begin{equation}
\label{1_31}
\ff'({\bf 0})[U,V]=\YYY\in\yy,
\end{equation}
where $\YYY=\Bigl(\overbrace{\bigl(f(r,t),g(r,t)\bigl)}^{in\ \aa},\ \bigl(\underbrace{n_1(t),n_2(t),n_0(r),n(r)\bigl)}_{in\ \bb},\  \bigl( \underbrace{l_1(t),l_2(t),l_0(r),l(r)\bigl)\Bigl)}_{in\ \bb})$. In components (\ref{1_31}) reads

\begin{equation}
\label{1_32}
U_{tt}-U_{rr}-\frac{1}{r}U_r=f(r,t)\quad \hbox{in}\quad Q_T
\end{equation}

\begin{equation}
\label{2_32}
U_{|G_1}=n_1(t),\quad U_{|G_2}=n_2(t),\quad U_{|G_0}=n_0(r),\quad \bigl(\frac{\pa U}{\pa t}\bigl)_{|G_0}=n(r)
\end{equation}

\begin{equation}
\label{3_32}
V_{tt}-V_{rr}-\frac{1}{r}V_r=g(r,t)\quad \hbox{in}\quad Q_T
\end{equation}

\begin{equation}
\label{4_32}
V_{|G_1}=l_1(t),\quad V_{|G_2}=l_2(t),\quad V_{|G_0}=l_0(r),\quad \bigl(\frac{\pa U}{\pa t}\bigl)_{|G_0}=l(r).
\end{equation}
The problems (\ref{1_32}), (\ref{2_32}) and  (\ref{3_32}), (\ref{4_32}) are uncoupled.  (\ref{1_32}), (\ref{2_32}) is simply the initial-boundary value for the wave equation in cylindrical coordinates with non-vanishing right hand side and non homogeneous initial boundary conditions. This problem is certainly solvable with continuous dependence from the data. The same can be said of  (\ref{3_32}), (\ref{4_32}). Thus problem (\ref{1_31}) is solvable with a unique solution and with continuous dependence on the data. Therefore Theorem 4.1 is applicable
\end{proof}

\section{A class of exact solutions of the system (\ref{1_27}), (\ref{2_27})}
 In this Section we construct a class of exact solutions which are globally defined in time. We first rewrite (\ref{1_27})-(\ref{2_27}) in divergence form. We proceeds in three steps. First of all it is easy to verify, by direct computation, that

\begin{equation}
\label{1_35}
\Bigl(\frac{e^{4\m}}{r}\o_t\Bigl)_t-\Bigl(\frac{e^{4\m}}{r}\o_r\Bigl)_r=0
\end{equation}

\begin{equation}
\label{2_35}
\m_{tt}-\m_{rr}-\frac{1}{r}\m_r=\frac{e^{4\m}}{2r^2}\bigl(\o_t^2-\o_r^2\bigl)
\end{equation}
is fully equivalent to the system (\ref{1_27}), (\ref{2_27}). Secondly it is convenient to use the simple transformation $\m=\frac{\p}{2}+\frac{1}{2}\log r$, $\o=\p$ which for $r>0$ is a diffeomorphism. In terms of $\p$ and $\phi$, (\ref{1_35}), (\ref{2_35}) becomes 

\begin{equation}
\label{1_36}
\p_{tt}-\frac{1}{r}\big(r\p_r\bigl)_r=e^{2\p}(\f_t^2-\f^2_r)=0
\end{equation}

\begin{equation}
\label{2_36}
\bigl(e^{2\p}\f_t\bigl)_t-\frac{1}{r}\bigl(re^{2\p}\f_r\bigl)_r=0.
\end{equation}
Thirdly we give a divergence form also to the equation (\ref{1_36}). To this end we note that, by (\ref{2_36}) we have

\begin{equation}
\label{1_37}
re^{\p}\bigl(\f_r^2-\f_t^2\bigl)=\bigl[\f\bigl(re^{2\p}\f_r\bigl]_r-\bigl[\f\bigl(re^{2\p}\f_t\bigl]_t.
\end{equation}
Thus the system (\ref{1_36}), (\ref{2_36}) takes the full divergence form

\begin{equation}
\label{2_37}
\bigl[r\bigl(\p_r-\f e^{2\p}\f_r\bigl)\bigl]_r-\bigl[r\bigl(\p_t-\f e^{2\p}\f_t\bigl)\bigl]_t=0
\end{equation}

\begin{equation}
\label{3_37}
\bigl(e^{2\p}\f_t\bigl)_t-\frac{1}{r}\bigl(e^{2\p}\f_r\bigl)_r=0.
\end{equation}
We restrict the search of solutions of the system (\ref{2_37}), (\ref{3_37}) to solutions for which there exist a functional relation of the form:

\begin{equation}
\label{4_37}
\p=\Psi(\f).
\end{equation}
 See for this approach \cite{C1}, \cite{C2}. The equations (\ref{2_37}) and (\ref{3_37}) become

\begin{equation}
\label{1_38}
\Bigl(re^{2\Psi(\f)}\f_r\Bigl)_r-\Bigl(re^{2\Psi(\f)}\f_t\Bigl)_t=0
\end{equation}

\begin{equation}
\label{2_38}
\Bigl[r\bigl(\Psi'(\f)-\f e^{2\Psi(\f)}\bigl)\f_r\Bigl]_r-\Bigl[r\bigl(\Psi'(\f)-\f e^{2\Psi(\f)}\bigl)\f_t\Bigl]_t=0.
\end{equation}

Let $\Psi(\f)$ be the solution of the ordinary differential equation

\begin{equation}
\label{3_38}
\frac{d\Psi}{d\f}-\f e^{2\Psi}=\g e^{2\Psi},\quad \g\ \hbox{ a real parameter}.
\end{equation}
Equation (\ref{3_38}) is easily solved. We find

\begin{equation}
\label{1_39}
e^{-2\Psi}=-\bigl(\f^2+2\g\f+C\bigl),
\end{equation}
where $C$ is the constant of integration. We have from (\ref{1_39}) the compatibility condition

\begin{equation}
\label{2_39}
-\bigl(\f^2+2\g\f+C\bigl)>0
\end{equation}
which is verified if

\begin{equation}
\label{4_39}
-\g-\sqrt{\g^2-C}<\f<-\g+\sqrt{\g^2-C}\quad \hbox{and}\quad C<\g^2.
\end{equation}
If  (\ref{4_39}) is satisfied we have

\begin{equation}
\label{5_39}
e^{2\Psi}=-\frac{1}{\f^2+2\g\f+C}.
\end{equation}
Substituting (\ref{5_39}) in (\ref{1_38}) we obtain

\begin{equation}
\label{1_40}
\frac{1}{r}\Bigl[r\frac{\f_r}{\f^2+2\g\f+C}\Bigl]_r-\Bigl[\frac{\f_t}{\f^2+2\g\f+C}\Bigl]_t=0.
\end{equation}
We wish to linearised (\ref{1_40}). Let us define

\begin{equation}
\label{1_10_1}
\Theta=F(\Phi):=\frac{-1}{2\sqrt{\g^2-C}}\log{\Bigl[\frac{-\Phi-\g+\sqrt{\g^2-C}}{\Phi+\g+\sqrt{\g^2-C}}\Bigl]}.
\end{equation}
We have, if (\ref{4_39}) holds

\begin{equation}
\label{2_10_1}
\frac{dF(\Phi)}{d\Phi}=-\frac{1}{\Phi^2+2\g+C}>0.
\end{equation}
 Let $\theta(r,t)=F(\phi(r,t))$. We have

\begin{equation}
\label{5_10_1}
\theta_r(r,t)=-\frac{\phi_r}{\phi^2+2\g\phi+C},\quad \theta_t(r,t)=-\frac{\phi_t}{\phi^2+2\g\phi+C}.
\end{equation}
Hence (\ref{1_40}) becomes the linear wave equation in cylindrical coordinates i.e.

\begin{equation}
\label{1_10_2}
\frac{1}{r}\bigl(r\theta_r\bigl)_r-\theta_{tt}=0.
\end{equation}
Let $\theta(r,t)$ be any solution of equation (\ref{1_10_2}). \footnote{We could take for example $\theta=K_0(r)e^t$ where $K_0(r)$ denotes the modified Bessel function of the second type, or $\theta(r,t)=J_0(r,t)sin(t)$, where $J_0(r,t)$ is the Bessel function of the first type.} Since $F(\Phi)$ maps one-to-one the open interval $(-\g-\sqrt{\g^2-C},-\g+\sqrt{\g^2-C})$ onto $(-\infty-\infty)$ the function $\Theta=F(\Phi)$ is globally invertible. Precisely we have

\begin{equation}
\label{1_10_3}
\Phi=F^{-1}(\Theta)=\frac{(\sqrt{\g^2-C}-\g)e^{2\Theta\sqrt{\g^2-C}}-(\sqrt{\g^2-C}+\g)}{e^{2\Theta\sqrt{\g^2-\g}}+1}.
\end{equation}
In conclusion we find the following class of solutions of the system (\ref{1_36}), (\ref{2_36})

\begin{equation}
\label{1_10_4}
\phi(r,t)=\frac{(\sqrt{\g^2-C}-\g)e^{2\sqrt{\g^2-C}\theta(r,t)}-(\sqrt{\g^2-C}+\g)}{e^{2\theta(r,t)\sqrt{\g^2-\g}}+1}
\end{equation}
with the corresponding $\psi(r,t)$ given by

\begin{equation}
\label{2_10_4}
\psi(r,t)=\log{\frac{1}{\sqrt{-(\phi^2(r,t)+2\g\phi(r,t)+C)}}},
\end{equation}
where $\theta(r,t)$ is any solution of (\ref{1_10_2}).

 An open question naturally arises: are all solutions of (\ref{1_36}), (\ref{2_36}) globally defined as it happens for (\ref{1_10_4}), (\ref{2_10_4}) or, in certain cases, they develop singularities in a finite time? This second possibility appears as more reasonable in view of the quadratic non-linearity entering in the right hand side of (\ref{1_36}).

\section{existence and uniqueness for the stationary problem}
In this last Section we study the stationary counterpart of the initial-boundary value problem (\ref{1_27})-(\ref{4_27}) expressed in terms of $(\f,\p)$. Thus we consider the two-point problem

\begin{equation}
\label{1_45}
\frac{1}{r}\bigl(r e^{2\p}\f'\bigl)'=0\quad \hbox{in}\quad (R_1,R_2)
\end{equation}

\begin{equation}
\label{2_45}
\f(R_1)=\f_1,\quad \f(R_2)=\f_2
\end{equation}

\begin{equation}
\label{3_45}
\frac{1}{r}\bigl(r \p'\bigl)'=e^{2\p}\f'^2\quad \hbox{in}\quad (R_1,R_2)
\end{equation}

\begin{equation}
\label{4_45}
\p(R_1)=\p_1,\quad \p(R_2)=\p_2.
\end{equation}
It is not restrictive to assume $\f_1=0$ in (\ref{1_45})-(\ref{4_45}). For  $(\f(r)+C,\p(r))$ is still a solution ($C$ an arbitrary constant) if $(\f(r),\p(r))$ is a solution of (\ref{1_45})-(\ref{4_45}). Therefore, if $(\f(r),\p(r))$ is the solution corresponding to the case $\f_1=0$, the solution for $\f_1\neq 0$ is simply  $(\f(r)+\f_1,\p(r))$. Hence we shall study the problem

\begin{equation}
\label{1_47}
\frac{1}{r}\bigl(r e^{2\p}\f'\bigl)'=0\quad \hbox{in}\quad (R_1,R_2)
\end{equation}

\begin{equation}
\label{2_47}
\f(R_1)=0
\end{equation}

\begin{equation}
\label{3_47}
 \f(R_2)=a
\end{equation}

\begin{equation}
\label{4_47}
\frac{1}{r}\bigl(r \p'\bigl)'=e^{2\p}\f'^2\quad \hbox{in}\quad (R_1,R_2)
\end{equation}

\begin{equation}
\label{5_47}
\p(R_1)=b
\end{equation}

\begin{equation}
\label{6_47}
\p(R_2)=c.
\end{equation}
The case $a=0$ is immediately dealt with. For multiplying (\ref{1_47}) by $\f$, integrating by parts and taking into account (\ref{2_47}) and (\ref{3_47}) we have

\begin{equation}
\label{7_47}
\int_{R_1}^{R_2}e^{2\p}\f'^2 dr=0.
\end{equation}
This in view of (\ref{2_47}) implies $\f(r)=0$. We have from (\ref{4_47})

\begin{equation}
\label{1_48}
\frac{1}{r}\bigl(r \p'\bigl)'=0,\quad \p(R_1)=b,\quad \p(R_2)=c,
\end{equation}
a linear problem which is easily solved. Moreover, we can assume $a>0$. For, if $(\f(r),\p(r))$ is the solution corresponding to $a>0$, the solution for the case $a<0$  is simply $(-\f(r),\p(r))$. By the one-dimensional maximum principle \cite{PW}, we have from (\ref{1_47}), (\ref{2_47})

\begin{equation}
\label{2_49}
0\leq \f(r)\leq a\quad \hbox{in}\quad  [R_1,R_2].
\end{equation}
The transformation

\begin{equation}
\label{0_50}
\ttt=\frac{\f^2}{2}+\frac{1}{2}\bigl(e^{-2\p}-e^{-2b}\bigl)
\end{equation}
 permits to rewrite the problem (\ref{1_47})-(\ref{6_47}) in the more symmetric form

\begin{equation}
\label{1_50}
\frac{1}{r}\bigl(re^{2\p}\ttt'\bigl)=0
\end{equation}

\begin{equation}
\label{2_50}
\frac{1}{r}\bigl(re^{2\p}\f'\bigl)=0
\end{equation}

\begin{equation}
\label{3_50}
\ttt(R_1)=0
\end{equation}

\begin{equation}
\label{4_50}
\ttt_2=:\ttt(R_2)=\frac{1}{2}a^2+\frac{1}{2}e^{-2c}-\frac{1}{2}e^{-2b}
\end{equation}

\begin{equation}
\label{5_50}
\f(R_1)=0
\end{equation}

\begin{equation}
\label{6_50}
\f(R_2)=a,
\end{equation}
where $\ttt$, $\f$, $\p$ are linked by the functional relation (\ref{0_50}). We have

\begin{lemma}
If $(\f(r),\ttt(r),\p(r))$ is a solution of (\ref{0_50})-(\ref{6_50}), then 

\begin{equation}
\label{1_51}
\ttt(r)=\frac{\ttt_2}{a}\f(r).
\end{equation}
\end{lemma}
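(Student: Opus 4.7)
The plan is to observe that, holding the unknown $\p(r)$ fixed, equations (\ref{1_50}) and (\ref{2_50}) are \emph{identical} linear ODEs for $\ttt$ and $\f$ respectively, and both satisfy a homogeneous left boundary condition (\ref{3_50}) and (\ref{5_50}). So $\ttt$ and $\f$ must be scalar multiples of one another, and the scalar is fixed by evaluating at $r=R_2$.

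Concretely, I would first integrate (\ref{2_50}) once: there is a constant $K_\f$ with
\begin{equation*}
re^{2\p(r)}\f'(r)=K_\f\quad\text{on }[R_1,R_2],
\end{equation*}
so that $\f'(r)=K_\f/(re^{2\p(r)})$ and, using $\f(R_1)=0$,
\begin{equation*}
\f(r)=K_\f\int_{R_1}^{r}\frac{ds}{s\,e^{2\p(s)}}.
\end{equation*}
The same reasoning applied to (\ref{1_50}) with $\ttt(R_1)=0$ gives a constant $K_\ttt$ with
\begin{equation*}
\ttt(r)=K_\ttt\int_{R_1}^{r}\frac{ds}{s\,e^{2\p(s)}}.
\end{equation*}
Note that $K_\f\neq 0$: otherwise $\f$ would be identically zero, contradicting $\f(R_2)=a>0$.

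Dividing the two representations gives $\ttt(r)/\f(r)=K_\ttt/K_\f$ for $r\in(R_1,R_2]$, a constant. Evaluating at $r=R_2$ using (\ref{4_50}) and (\ref{6_50}) yields $K_\ttt/K_\f=\ttt_2/a$, and hence
\begin{equation*}
\ttt(r)=\frac{\ttt_2}{a}\f(r),
\end{equation*}
which is (\ref{1_51}). There is no real obstacle here; the only point that needs a line of justification is that the integrating factor $re^{2\p(r)}$ is a well-defined positive continuous function (so the first integration step is legal), which follows from the assumed regularity of a solution $(\f,\ttt,\p)$.
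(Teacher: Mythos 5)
Your proof is correct, but it follows a genuinely different route from the paper's. You integrate the two ODEs explicitly: a first integral gives $re^{2\p}\ttt'=K_\ttt$ and $re^{2\p}\f'=K_\f$, so both unknowns are multiples of the single function $I(r)=\int_{R_1}^{r}\frac{ds}{s\,e^{2\p(s)}}$, and the ratio of the two constants is read off at $r=R_2$ from (\ref{4_50}), (\ref{6_50}). The paper instead argues by linearity plus an energy estimate: it sets $\z(r)=\ttt(r)-\frac{\ttt_2}{a}\f(r)$, notes that $\z$ satisfies the same equation $\frac{1}{r}\bigl(re^{2\p}\z'\bigr)'=0$ with homogeneous data $\z(R_1)=\z(R_2)=0$, multiplies by $\z$ and integrates by parts to obtain $\int_{R_1}^{R_2}re^{2\p}\z'^{2}\,dr=0$, and concludes $\z\equiv 0$ since $re^{2\p}>0$. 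Your quadrature argument is more elementary and constructive --- it exhibits the one-dimensional solution space explicitly and needs no integration by parts --- whereas the paper's energy argument is the pattern that survives in settings where no explicit integrating factor is available (for instance genuinely multi-dimensional elliptic problems). Two small points of bookkeeping in your write-up: the division $\ttt(r)/\f(r)$ is only legitimate on $(R_1,R_2]$, so you should either remark that (\ref{1_51}) holds trivially at $r=R_1$ (both sides vanish there) or, more cleanly, avoid division altogether by writing $\ttt=K_\ttt I$, $\f=K_\f I$ and deducing $K_\ttt=\frac{\ttt_2}{a}K_\f$ from the values at $R_2$; and the hypothesis $a>0$ that you invoke to get $K_\f\neq 0$ is indeed available, since the paper reduces to that case before stating the lemma (in any event $a\neq 0$ is forced for the statement (\ref{1_51}) to make sense).
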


\begin{proof}
Let $\z(r)=\ttt(r)-\frac{\ttt_2}{a}\f(r)$, we have

\begin{equation}
\label{2_51}
\z(R_1)=0,\quad \z(R_2)=0
\end{equation}
and by (\ref{1_50}) and (\ref{2_50})

\begin{equation}
\label{4_51}
\frac {1}{r}\bigl(re^{2\p}\z'\bigl)'=0.
\end{equation}
Multiplying (\ref{4_51}) by $\z(r)$ and using (\ref{2_51}) we obtain

\begin{equation}
\label{1_52}
\int_{R_1}^{R_2}re^{2\p}\z'^2 dr=0.
\end{equation}
By (\ref{2_51}) this implies $\z(r)=0$ and (\ref{1_51}).
\end{proof}

\noindent From (\ref{0_50}) and (\ref{1_51}) we have

\begin{equation}
\label{2_52}
\frac{\ttt_2}{a}\f(r)=\frac{\f^2(r)}{2}+\frac{1}{2}e^{-2\p(r)}-\frac{1}{2}e^{-2b}
\end{equation}
and, by (\ref{4_50}),

\begin{equation}
\label{3_52}
-\f^2+\big[a+\frac{1}{a}\bigl(e^{-2c}-e^{-2b}\bigl)\bigl]\f+e^{-2b}=e^{-2\p}.
\end{equation}
We need to solve (\ref{3_52}) with respect to $\p$. This requires the positivity of the left hand side  of (\ref{3_52}). We use the following elementary

\begin{lemma}
If $a>0$, $b\in \R$, $c\in\R$ and $0\leq\f\leq a$ we have

\begin{equation}
\label{1_53}
f(\f,a,b,c):=-\f^2+\big[a+\frac{1}{a}\bigl(e^{-2c}-e^{-2b}\bigl)\bigl]\f+e^{-2b}>0.
\end{equation}
\end{lemma}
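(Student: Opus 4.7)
The function $f(\cdot,a,b,c)$ is a quadratic polynomial in $\f$ with leading coefficient $-1$, hence strictly concave on the whole real line. On the compact interval $[0,a]$ a concave function attains its minimum at one of the endpoints, since for any $\f = (1-s)\cdot 0 + s\cdot a$ with $s\in[0,1]$ one has
\[
f(\f,a,b,c) \;\geq\; (1-s)\,f(0,a,b,c) + s\,f(a,a,b,c) \;\geq\; \min\bigl(f(0,a,b,c),\,f(a,a,b,c)\bigr).
\]
So the plan is simply to compute the two boundary values and show both are positive.

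At $\f=0$ every term involving $\f$ drops out and one is left with $f(0,a,b,c)=e^{-2b}$, which is strictly positive since $b\in\R$. At $\f=a$ the $-\f^2$ term cancels the $a\cdot \f$ part of the linear term, and the remaining contribution is $\frac{1}{a}(e^{-2c}-e^{-2b})\cdot a + e^{-2b} = e^{-2c}-e^{-2b}+e^{-2b}=e^{-2c}$, again strictly positive. Therefore
\[
\min_{0\leq \f\leq a} f(\f,a,b,c) \;\geq\; \min(e^{-2b},\,e^{-2c}) \;>\;0,
\]
which is the claimed inequality.

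There is essentially no obstacle: the content of the lemma is the algebraic miracle that the coefficient of the linear term has been arranged exactly so that the quadratic reduces to $e^{-2c}$ at the right endpoint, which is precisely what makes the logarithm in $\p = -\tfrac12\log f(\f,a,b,c)$ well-defined on the full segment $[R_1,R_2]$ and consistent with the boundary data $\p(R_1)=b$, $\p(R_2)=c$. The only thing to be careful about is recording concavity correctly (minimum at endpoints, not at the vertex of the parabola, which lies above the segment since the leading coefficient is negative).
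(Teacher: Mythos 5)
Your proof is correct, but it takes a genuinely different route from the paper's. The paper argues via the quadratic formula: it writes out both roots of $f(\cdot,a,b,c)$ explicitly, observes that (since the discriminant term $\sqrt{(e^{-2c}-e^{-2b}+a^2)^2+4a^2e^{-2b}}$ dominates) one root is negative, and then checks that the other root exceeds $a$ by computing the sign of its difference with $a$; positivity on $[0,a]$ then follows because a downward-opening parabola is positive strictly between its roots. You instead evaluate $f$ at the two endpoints, find $f(0,a,b,c)=e^{-2b}$ and $f(a,a,b,c)=e^{-2c}$, and use concavity to conclude that the minimum over $[0,a]$ is attained at an endpoint, hence is at least $\min(e^{-2b},e^{-2c})>0$. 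Both arguments exploit the same structural fact (leading coefficient $-1$), but yours is shorter, avoids all surd manipulation, and makes transparent the reason the linear coefficient was arranged as it is: $f$ is precisely the concave quadratic interpolating the values $e^{-2b}$ and $e^{-2c}$ forced by the boundary data $\p(R_1)=b$, $\p(R_2)=c$. Your route also sidesteps a small slip in the paper's write-up: the displayed roots $\f_1$ (taken with $+$) and $\f_2$ (taken with $-$) are labeled inconsistently with the subsequent claims $\f_1<0<\f_2$ and with the formula for $\f_2-a$, which uses the $+$ sign; the intended reading is clearly that the minus-root is negative and the plus-root exceeds $a$. What the paper's computation buys in exchange is the exact interval of positivity $(\f_1,\f_2)$, which is more information than positivity on $[0,a]$, but that extra information is never used elsewhere, so nothing is lost by your more elementary argument.
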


\begin{proof}
The two roots of the equation $f(\f,a,b,c)=0$ are

\begin{equation}
\label{2_53}
\f_1=\frac{1}{2a}\Bigl[\bigl(e^{-2c}-e^{-2b}+a^2\bigl) + \sqrt{\bigl(e^{-2c}-e^{-2b}+a^2\bigl)^2+4a^2e^{-2b}}\Bigl]
\end{equation}

\begin{equation}
\label{3_53}
\f_2=\frac{1}{2a}\Bigl[\bigl(e^{-2c}-e^{-2b}+a^2\bigl) -\sqrt{\bigl(e^{-2c}-e^{-2b}+a^2\bigl)^2+4a^2e^{-2b}}\Bigl].
\end{equation}
Thus 

\begin{equation}
\label{4_53}
\f_1<0<\f_2.
\end{equation}
Moreover we have

\begin{equation}
\label{1_54}
\f_2-a=\frac{1}{2a}\Bigl[\bigl(e^{-2c}-e^{-2b}-a^2\bigl) +\sqrt{\bigl(e^{-2c}-e^{-2b}+a^2\bigl)^2+4a^2e^{-2b}}\Bigl].
\end{equation}
Hence $\f_2-a>0$ and (\ref{1_53}) follows.
\end{proof}
Let us define

\begin{equation}
\label{0_54}
M(a,b,c)=a+\frac{1}{a}\bigl(e^{-2c}-e^{-2b}\bigl).
\end{equation}
By Lemma 6.2 the equation

\begin{equation}
\label{2_54}
-\f^2+M(a,b,c)\f+e^{-2b}=e^{-2\p}
\end{equation}
in the unknown $\p$ has one and only one solution. Moreover, since

\begin{equation}
\label{3_54}
e^{2\p}=\frac{1}{-\f^2+M\f+e^{-2b}},
\end{equation}
we can restate the equation (\ref{2_50}) in term of the sole $\f$ and we have, for the determination of $\f(r)$, the nonlinear two-point problem

\begin{equation}
\label{1_55}
\Bigl(\frac{r\f'}{-\f^2+M\f+e^{-2b}}\Bigl)'=0
\end{equation}

\begin{equation}
\label{2_55}
\f(R_1)=0,\quad \f(R_2)=a.
\end{equation}
Let us define

\begin{equation}
\label{3_55}
u=\uu(\f):=\int_0^\f\frac{dt}{-t^2+Mt+e^{-2b}}.
\end{equation}
Since $0\leq\f\leq a$, the denominator in the integral (\ref{3_55}) never vanishes and is strictly positive by Lemma 6.2.\footnote{Note that the integral in (\ref{3_55}) can be explicitly computed.}
In terms of $u$ the problem (\ref{1_55}), (\ref{2_55})  becomes simply

\begin{equation}
\label{1_57}
(ru')'=0,\quad u(R_1)=0,\quad u(R_2)=\uu(a),
\end{equation}
a linear problem which is immediately solved. On the other hand $\uu'(\f)>0$ in $[0,a]$. Thus $\uu(\f)$ maps $[0,a]$ one-to-one onto $[0,\uu(a)]$. We conclude that , if $u(r)$ is the solution of (\ref{1_57}), we have

\begin{equation}
\label{2_57}
\f(r)=\uu^{-1}(u(r)).
\end{equation}
Finally, the corresponding $\p(r)$ is, by (\ref{3_54}),

\begin{equation}
\label{3_57}
\p(r)=-\frac{1}{2}\log\bigl[-\f^2(r)+M\f(r)+e^{-2b}\bigl].
\end{equation}

\begin{remark}
The solution of problem (\ref{1_50})-(\ref{6_50}) obtained above is also unique since the solution of the linear problem (\ref{1_57}), to which the starting problem  (\ref{1_50})-(\ref{6_50}) is reduced, is surely unique.
\end{remark}

\bibliographystyle{amsplain}

\end{document}